\begin{document}

\mainmatter  

\title{Fast error-tolerant quartet phylogeny algorithms}
\titlerunning{Fast quartet phylogeny algorithms}
\author{Daniel G. Brown \and Jakub Truszkowski}
\authorrunning{Daniel G. Brown and Jakub Truszkowski}
\institute{David R. Cheriton School of Computer Science\\
University of Waterloo\\
Waterloo ON N2L 3G1 Canada\\
\email{{browndg,jmtruszk}@uwaterloo.ca}}

\maketitle

\begin{abstract}
We present an algorithm for phylogenetic reconstruction using quartets that returns the correct topology for $n$ taxa in $O(n \log n)$ time with high probability, in a probabilistic model where a quartet is not consistent with the true topology of the tree with constant probability, independent of other quartets.  Our incremental algorithm relies upon a search tree structure for the phylogeny that is balanced, with high probability, no matter what the true topology is.  Our experimental results show that our method is comparable in runtime to the fastest heuristics, while still offering consistency guarantees.
\end{abstract}
\newpage
\section{Introduction}

Incremental phylogenetic reconstruction algorithms add new taxa to a topology until all $n$ taxa have been added.  They optimize a greedy objective at all $n$ insertions, much as agglomerative algorithms (like neighbour joining or UPGMA) optimize an objective at all $n-1$ agglomerations.  Such algorithms can be quite efficient.  If each addition requires $O(f(n))$ time, the overall runtime is $O(nf(n))$.  

We give an algorithm where each insertion requires $O(\log n)$ runtime with high probability, and where the probability that any insertion is incorrect is $o(1)$ in a simple error model.  Thus, our randomized algorithm has runtime $O(n\log n)$ with high probability (regardless of the true topology) and $o(1)$ probability of producing an incorrect topology.   We believe it is the first $O(n poly \log n)$-runtime algorithm with such guarantees.  Any $o(n\log n)$-runtime algorithm cannot return all topologies, so our algorithm is asymptotically optimal.


We present a review of related work,  give basic definitions, and then give the algorithm in the case of error-free data.  Then, we extend the algorithm to the case of data containing noise.  Finally, we give some experimental results on real and simulated data.  Our error-tolerant algorithms offer the possibility of producing a phylogenetic tree in runtime smaller than that of producing even the input matrix to a distance method like neighbour joining, while still having high probability of reconstructing the true tree.

\section{Related work}

Phylogenetic quartet methods reconstruct trees from sets of four taxa and combine these phylogenies into the overall tree. Quartet puzzling~\cite{QP} is one of the first algorithms in this line of research. Many heuristic algorithms also operate on this principle ({\em e.g.}~\cite{gascuel,short-qp}).

Some quartet algorithms find the correct phylogeny with high probability under a certain model of evolution. Erd{\"o}s {\em et al.}~\cite{erdos} give an $O(n^4\log n)$ algorithm that reconstructs the phylogeny with $1-o(1)$ probability assuming that the sequences evolve according to the Cavender-Farris model of evolution, for sufficiently long sequences. The runtime of their algorithm is $O(n^2)$ for most trees. Cs{\H u}ros~\cite{csuros} provided a practical $O(n^2)$ algorithm with similar performance guarantees. Recent papers~\cite{gronau,mossel} give similar algorithms to identify parts of the tree that can be reconstructed. These approaches choose queries so that, in the assumed model of evolution, all queries are  correct with high probability. 

The only sub-quadratic time algorithm with guarantees on reconstruction accuracy is by King {\it et al.}~\cite{king}. The running time is $O(n^2\frac{\log \log n}{\log n})$ provided that the sequences are long enough.

Wu {\it et al.}~\cite{Kao} gave a simple error model where each quartet query independently errs with fixed probability $p$. They gave an $O(n^4 \log n)$ algorithm that errs with constant probability under this model. This model has also been used for evaluating algorithms for maximum quartet consistency~\cite{Lin}.

We improve on Wu {\it et al.} in runtime and accuracy with an $O(n\log n)$ algorithm that errs with probability $o(1)$. To our knowledge, it is the first provably error-tolerant, substantially sub-quadratic time algorithm for phylogenetic reconstruction.  (Recently, an $O(n^{1.5})$ heuristic algorithm has been proposed~\cite{FastTree}.)

Fast algorithms have been proposed for error-free data. Kannan {\it et al.}~\cite{Kannan} use error-free {\it rooted triples} in an $O(n\log n)$ algorithm. Rooted triples reduce to quartets if we pick one taxon as an outgroup and always ask quartet queries for sets with that taxon, so that algorithm works for error-free quartets.

Our algorithm uses ideas from work on noisy binary search in which comparisons have fixed error probability, by Feige {\it et al.}~\cite{Feige} and Karp and Kleinberg~\cite{Karp}.

\section{Definitions}

We begin with definitions about the two trees we will focus on: the phylogeny we are reconstructing and the search tree that allows us to do the insertions. 

A {\em phylogeny} $T$ is an unrooted binary tree with $n$ leaves in 1-to-1 correspondence with a set $\mathcal{S}$ of taxa.  Removing internal node $v$, and its incident edges, from a phylogeny yields three subtrees, $t_i(T,v)$ for $i=1,2,3$.  The tree $t_i(T,v)$ joined with its edge to $v$ is the {\em child subtree} $c_i(T,v)$.  Phylogeny $T'$ is {\em consistent} with $T$ if its taxa are a subset of those of $T$, and $T'$ is formed by the union of all paths in $T$ between taxa in $T'$, with internal nodes of degree 2 removed.  A {\it border node} of subtree $T'$ in $T$ is any internal node of $T$ that is a leaf in $T'$.   

A {\em quartet} is a phylogeny of four taxa.  A {\em quartet query} $q(a,b,c,d)$, returns one of three possible quartet topologies: $ab|cd$, $ac|bd$ and $ad|bc$, where in $ab|cd$, if we remove the internal edge, we disconnect $\{a,b\}$ from $\{c,d\}$.  We assume a quartet query can be done in $O(1)$ time.  In Section \ref{sec-errors} our error model considers how often quartet queries for four taxa of $T$ are inconsistent with $T$.
A {\em node query} $N(T,v,x)$ for internal node $v$ of phylogeny $T$ and new taxon $x$ is a quartet query $q(x,a_1,a_2,a_3)$, where $a_i$ is a leaf of $T$ in $t_i(T,v)$.  Such a query identifies the  $c_i(T,v)$ where taxon $x$ belongs, if it is consistent with the true topology.  
\subsection{Search tree}

A natural algorithm to add taxon $x$ to phylogeny $T$ begins at an internal node $v$ and uses node query $N(T,v,x)$ to identify the $t_i(T,v)$ where taxon $x$ belongs.  We move to the neighbour of $v$ in that subtree, and repeat the process until the subtree into which $x$ is to be placed is only one edge $e$, which we break into two edges and hang $x$ onto; see Figure \ref{fig-simplest}.  We follow the path from $v$ to an endpoint of $e$ and identify the other endpoint with one more query.  The number of node queries equals this path length plus one.  For a balanced tree with diameter $\Theta(\log n)$, this gives a $\Theta(n\log n)$ incremental phylogeny algorithm.  But for trees like a caterpillar tree, with $\Theta(n)$ diameter, this algorithm requires $\Theta(n^2)$ queries.

\begin{figure}[htb]
\centering
\includegraphics[width=3in]{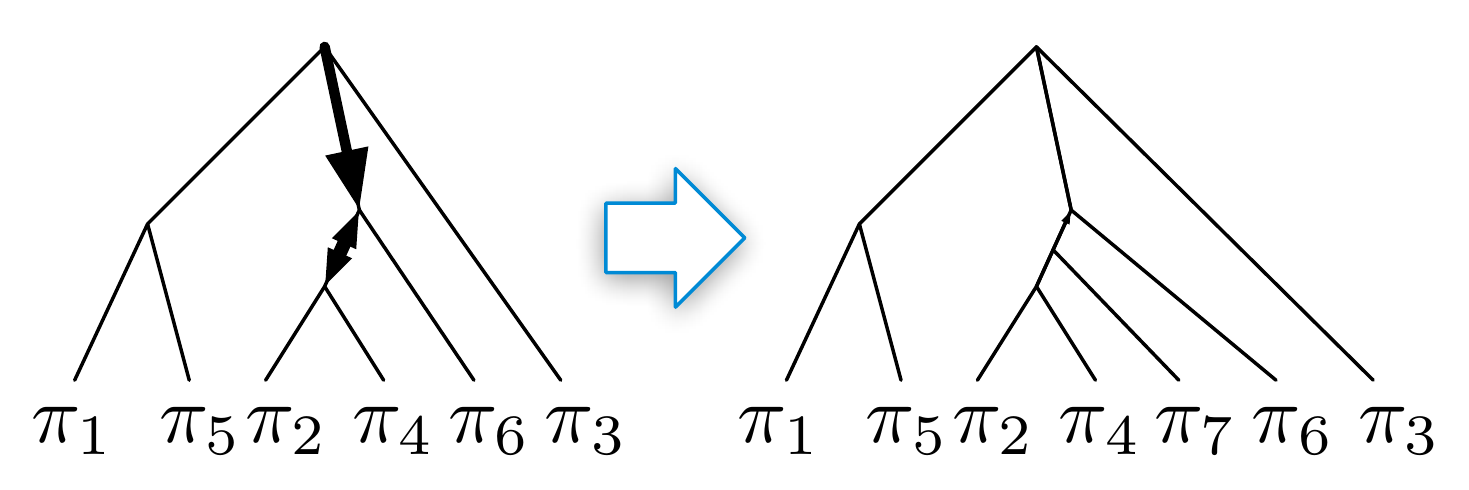}
\caption{Natural incremental algorithm: start at root and search to find place for new taxon $\pi_7$ by asking queries down the path.  Break an edge to insert the new taxon.
\label{fig-simplest}}
\end{figure}

We give a search tree structure to manage the expected number of queries on the search path, regardless of the underlying tree topology.

\begin{figure}[t]
\centering
\includegraphics[width=2.3in]{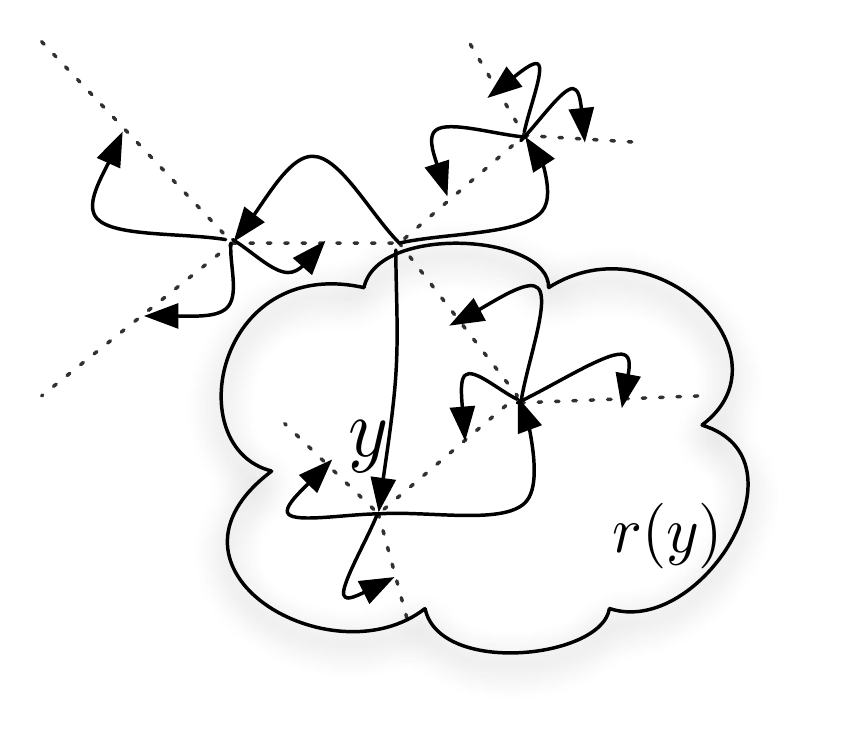}
\caption{A search tree for a seven-taxon phylogeny.  Directed search tree edges are shown in solid lines; the underlying phylogeny is in dotted lines.  The search tree node $y$ corresponds to the region $r(y)$ of the phylogeny indicated by the cloud.
\label{fig-searchtree}}
\end{figure}

\begin{definition}
A {\it search tree} $Y(T)$ for a phylogeny $T$ is a rooted ternary tree satisfying the following conditions:
\begin{enumerate}
\item Each node $y$ in $Y(T)$ is associated with a distinct subtree $r(y)$ of $T$.
\item The root of $Y(T)$ is associated with the full tree $T$.
\item For each internal node $y$ in $Y(T)$, there exists an internal node $s(y)$ in $T$ such that the three subtrees associated with the children of $y$ are the intersections between $r(y)$ and the three child subtrees of the node $s(y)$ in $T$.  There are also three nonempty lists $\ell_i(y)$ stored at each internal node $y$; each element of $\ell_i(y)$ is a taxon in $t_i(T,y)$.
\item For each node $y$ in $Y(T)$, $r(y)$ has at most two border nodes in $T$
\end{enumerate}
\end{definition}

$Y(T)$ is {\em complete} if each leaf in $Y(T)$ is associated with a single edge of $T$, and each edge of $T$ has a corresponding leaf in $Y(T)$.  For a given node $y$ in the search tree, its associated node $s(y)$ in $T$ may be picked so the three child subtrees are reasonably balanced; this gives expected $O(\log n)$ insertion time.  See Figure \ref{fig-searchtree} for an example.

\section{An algorithm for error-free data}

Using our search tree structure gives a straightforward incremental phylogeny algorithm if quartets are all consistent with $T$, the true topology.  

We pick a random permutation $\pi$ of the taxa, and start with the unique topology $T_3$ for $\{\pi_1,\pi_2,\pi_3\}$, and a search tree $Y(T_3)$ with four nodes: a root $w$ with $r(w) = T_3$ and $s(w)$ the internal node of $T_3$, and with one leaf for each edge of $T_3$.  We also store $\ell_i(w) = \{\pi_i\}$; we also use $\ell_i(w)$ to represent the unique member of this set. This fits our requirements for a complete search tree of $T_3$.

Now, assuming $T_i$ is consistent with $T$, and $Y(T_i)$ is a valid search tree for $T_i$, we add $\pi_{i+1}$, to produce $T_{i+1}$ and $Y(T_{i+1})$.  We start at the root $w$ of $Y(T_i)$ and ask the node query $N(T_i,s(w),\pi_{i+1})$ using the quartet $q(\pi_{i+1},\ell_1(w),\ell_2(w),\ell_3(w))$; this tells us which child of $w$ we should move to next.  We continue until we reach a leaf $y$ of $Y(T_i)$; this corresponds to the edge $e$ of $T_i$ where the new taxon $\pi_{i+1}$ belongs.  We break edge $e$ into two parts, creating a new node $u$ and a new edge from $u$ to the new leaf $\pi_{i+1}$.  The new tree is $T_{i+1}$.  

To update $Y(T_i)$, we create three edges from $y$ to a new node for each of the three newly created edges and let $\ell_1(y)$ be $\{\pi_{i+1}\}$, and set $\ell_2(y)$ and $\ell_3(y)$ to contain the taxon closest to $\pi_{i+1}$ in the final quartet query and one of the two taxa that was not closest to $\pi_{i+1}$ in that query.   Since node $y$ was a leaf in $Y(T_i)$, these nodes are in proper configuration with respect to $y$ in $T_{i+1}$.  See Figure \ref{fig-searchtree-insert}.

\begin{figure}[t]
\centering
\includegraphics[width=4in]{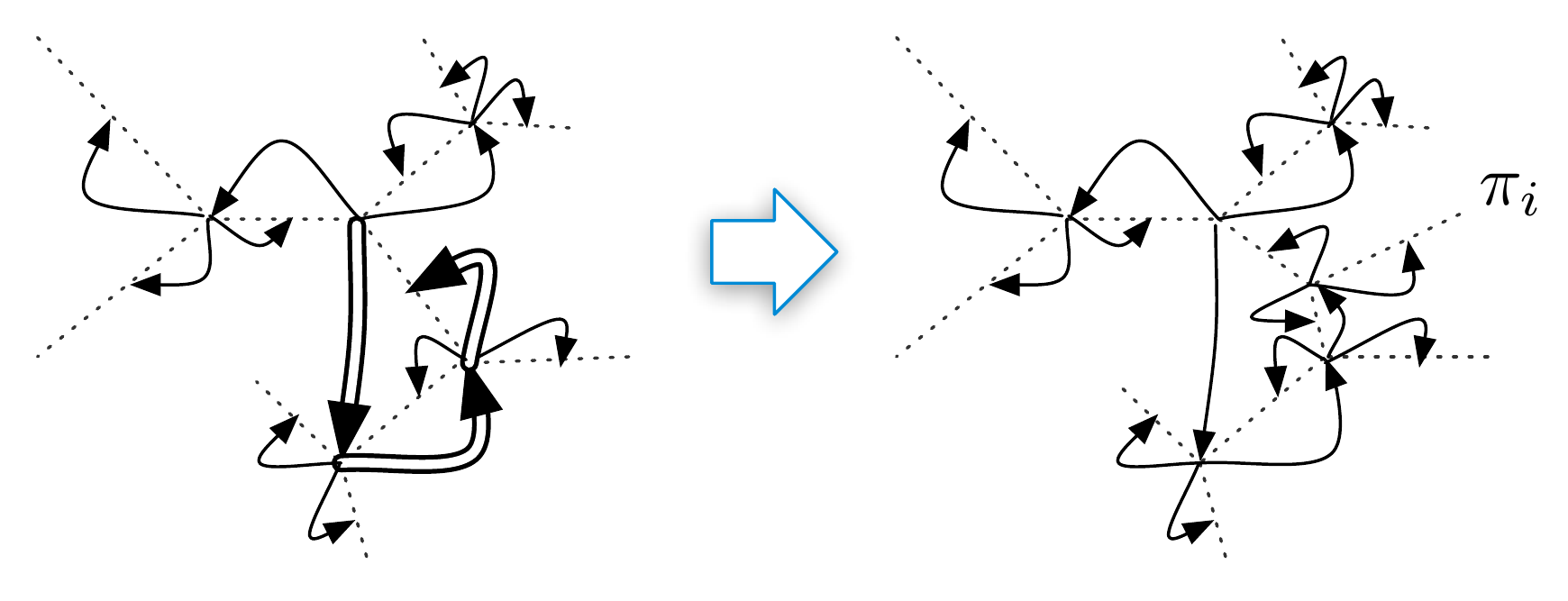}
\caption{Inserting into a search tree.  To insert $\pi_8$ into the phylogeny, we follow the path through the search tree indicated with double arrows.  We find the correct edge to break to add $\pi_8$ to the tree, and modify the search tree locally to accommodate the change.
\label{fig-searchtree-insert}}
\end{figure}

Assuming the quartet queries all are consistent with the true topology $T$, we discover in this way the proper place in the tree to insert each new taxon and maintain the invariants required for a complete search tree.  In particular, the only subtrees whose border nodes need to be considered are those created by the new node addition, and as they are all either single edges or derived from a single edge in $Y(T_i)$, they continue to have at most two border nodes.

\begin{theorem} 
\label{thm-no-errors}
If all quartet queries made by this algorithm are consistent with $T$, then this algorithm returns $T$.  Its runtime is $O(n \log n)$ with probability $1-o(1)$.  
\end{theorem}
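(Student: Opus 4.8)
My plan is to treat the two assertions separately. Correctness follows from a straightforward induction on the number of taxa inserted. For the running time, the cost of an insertion equals the length of a root-to-leaf search path in the current search tree, and the heart of the argument is to show that, whatever the true topology $T$ is, the randomly built search tree stays balanced --- i.e.\ it behaves like a randomized ternary search tree, whose typical depth is $O(\log n)$. I expect this last point to be the main obstacle.

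For correctness, I would prove by induction on $i$ that, as long as every quartet query is consistent with $T$, the tree $T_i$ is consistent with $T$ and contains exactly $\pi_1,\dots,\pi_i$, while $Y(T_i)$ is a valid complete search tree for $T_i$. The base case $i=3$ is immediate from the construction of $T_3$ and its four-node search tree. For the inductive step, the key point is that a node query $N(T_i,s(y),\pi_{i+1})$ that is consistent with $T$ correctly names the child subtree $c_\ell(T_i,s(y))$ of $s(y)$ where $\pi_{i+1}$ belongs; following these answers down $Y(T_i)$ therefore lands us at the leaf whose region is precisely the unique edge of $T_i$ that $\pi_{i+1}$ must be attached to in order to keep the tree consistent with $T$. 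Subdividing that edge yields $T_{i+1}$, again consistent with $T$, and the local surgery on $Y$ re-establishes all four search-tree invariants: in particular each of the three freshly created regions is a single edge, so Invariant~4 is trivially preserved, and the others are immediate from the construction. Hence after the final insertion $T_n=T$.

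For the running time, inserting $\pi_{i+1}$ costs $O(1)$ per node on its root-to-leaf search path in $Y(T_i)$ (one node query per node, assumed $O(1)$), plus $O(1)$ for the surgery, so the total time is $O(n+\sum_{i} d_{i+1})$, where $d_{i+1}$ is that path length; it suffices to show $\sum_i d_{i+1}=O(n\log n)$ with probability $1-o(1)$. I would obtain this by proving that each individual $d_{i+1}$ is at most $c\log n$ with probability $1-n^{-2}$ for a suitable constant $c$, and then union-bounding over the $n$ insertions. To analyze a single $d_{i+1}$, fix $x=\pi_{i+1}$ and observe that a search-tree node created when an earlier taxon $\pi_m$ was inserted lies on $x$'s search path exactly when $\pi_m$ and $x$ would attach to the same edge of $T_{m-1}$; thus $d_{i+1}$ equals one plus the number of earlier taxa having this property. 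Read top-down, the construction is a randomized recursion: every region of the search tree is first subdivided by whichever of the taxa destined to land inside it is inserted first --- an essentially uniformly random choice --- and that taxon's node query splits the remaining taxa of the region among (at most) three children, after which the same process recurses inside each. So $d_{i+1}$ is structurally like the depth of a fixed item in a randomized ternary search tree on $n$ items, for \emph{every} topology $T$; the topology controls only how the random pivot splits each list, never which taxon is the pivot.

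It then remains to establish, uniformly over all topologies, the analogue of the classical fact that a randomized search tree has height $O(\log n)$ with probability $1-n^{-\Omega(1)}$. Concretely, I would write $d_{i+1}-1$ as a sum of indicators, one per other taxon, express each indicator as the event that that taxon is inserted before every taxon of a certain ``witness set'' associated with it, and argue that these witness sets are nested along any root-to-leaf path; this nesting yields $\mathbb{E}[d_{i+1}]=O(\log n)$ directly, and a Chernoff-type estimate on the longest chain of search-tree ancestors (as in the classical analysis of random binary search tree heights) yields the stated tail bound. The delicate part --- and the step I expect to be the main obstacle --- is verifying that these witness sets have the required structure for \emph{every} adversarial choice of $T$, since $T$ is not random; once that is in hand, the union bound over the $n$ insertions shows that with probability $1-o(1)$ all search paths have length $O(\log n)$, so $\sum_i d_{i+1}=O(n\log n)$ and the algorithm runs in time $O(n\log n)$.
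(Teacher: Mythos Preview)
Your correctness argument is fine and matches the paper's: the induction plus the invariant check is exactly what is needed, and the paper's own proof of the theorem simply points back to that discussion.

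For the runtime you take a genuinely different route from the paper, and the part you flag as ``the main obstacle'' is, in fact, a real gap. Your characterisation that the internal search-tree node created by $\pi_m$ lies on $x$'s path iff $\pi_m$ and $x$ would attach to the same edge of $T_{m-1}$ is correct. The trouble is the next step: you want to rewrite each indicator as ``$\pi_m$ precedes every taxon in a fixed witness set $W(\pi_m,x)$'' and then exploit nestedness. But the event ``$\pi_m$ and $x$ attach to the same edge of $T_{m-1}$'' is \emph{not} equivalent to ``no taxon of some fixed set appears among $\pi_1,\dots,\pi_{m-1}$''. Whether a vertex $u$ on the $\pi_m$--$x$ path in $T$ becomes a branching vertex of $T_{m-1}$ depends on whether $T_{m-1}$ has taxa in \emph{all three} directions at $u$; a taxon $z$ hanging off that path can be present in $T_{m-1}$ and still fail to separate $\pi_m$ from $x$ if $T_{m-1}$ happens to lie entirely on one side of $u$. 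So the ancestor event is not a simple ``first among $W$'' event, the random-BST analogy does not transfer directly, and the nested-witness-set programme, as stated, does not go through for arbitrary $T$.

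The paper sidesteps this entirely with a phase argument. It first proves a purely combinatorial fact (its Lemma~1): every phylogeny on $m$ taxa contains two disjoint child subtrees each holding between $m/6$ and $m/3$ taxa. It then groups the search steps into phases during which the current region shrinks by a factor $5/6$, and shows that within each phase the random order of taxa guarantees that a taxon from one medium-sized subtree, then the other, then neither, is encountered after a geometrically bounded wait; once that happens the region has shrunk. Summing over $O(\log_{6/5} i)$ phases and applying a negative-binomial/Chernoff tail bound gives depth at most $c\log i$ with probability $1-o(i^{-4})$, uniformly over all topologies, and the union bound finishes. The key idea you are missing is precisely this structural lemma about balanced subtrees in arbitrary phylogenies; it replaces the need for any witness-set machinery.
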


\begin{proof}
We have seen that the algorithm returns $T$. In the next subsection, we show that inserting  taxon $\pi_i$ requires $O(\log n)$ queries with high probability, each of which requires $O(1)$ time; the work to create a new edge requires constant time.  The overall runtime is $O(n\log n)$ with high probability.
\end{proof}

\subsection{The height of the search tree}
To prove Theorem \ref{thm-no-errors}, we need to know the height of the search tree $Y(T)$.  We will show that this tree is almost surely balanced, using several lemmas.

\begin{lemma}
For any phylogeny $T$, with $n$ taxa, there exist two disjoint child subtrees $A$ and $B$ of the form $t_i(T,v)$ with at least $n/6$ and at most $n/3$ taxa.
\end{lemma}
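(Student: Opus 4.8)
\medskip

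The plan is to exhibit the two subtrees by a peeling argument. Call a child subtree $t_i(T,v)$ \emph{heavy} if it contains at least $n/6$ taxa, and \emph{minimal-heavy} if, in addition, no heavy child subtree is strictly contained in it. Heavy child subtrees certainly exist: at any internal node $v$ the sizes of $t_1(T,v),t_2(T,v),t_3(T,v)$ sum to $n$, so the largest of them is at least $n/3\ge n/6$. Since $T$ has only finitely many child subtrees, the containment order on the heavy ones has minimal elements, and any such minimal element is minimal-heavy.

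The first step is to show that every minimal-heavy subtree $A$ has \emph{fewer than} $n/3$ taxa. Write $A=t_i(T,v)$ and let $w$ be its unique border node in $T$, i.e.\ the neighbour of $v$ that lies inside $A$. If $w$ were a leaf of $T$, then $A$ would consist of a single taxon, which for $n\ge 7$ contradicts heaviness; so $w$ is an internal node of degree three. The two subtrees of $w$ lying inside $A$ are themselves child subtrees of $T$, each strictly contained in $A$ (each omits at least the nonempty taxon set of the other), hence neither is heavy, so each has fewer than $n/6$ taxa. As their taxa partition those of $A$, we get $|A|<n/3$. Combined with heaviness this gives $|A|\in[n/6,n/3)$, so any minimal-heavy $A$ already supplies the first of the two desired subtrees.

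For the second subtree, write the chosen $A$ as $t_i(T,v)$; note $v$ is internal. The other two branches at $v$ satisfy $|t_j(T,v)|+|t_k(T,v)|=n-|A|>2n/3$, so at least one of them --- say $t_j(T,v)$ --- is heavy. The heavy child subtrees contained in $t_j(T,v)$ form a nonempty finite family, which therefore has a minimal element $B$; since anything heavy strictly inside $B$ would also lie inside $t_j(T,v)$, $B$ is minimal-heavy, and the first step gives $|B|\in[n/6,n/3)$. Finally $A$ and $B$ are disjoint: the taxa (indeed all vertices) of $B$ lie in $t_j(T,v)$ while those of $A=t_i(T,v)$ lie in a different branch at $v$, and distinct branches at $v$ are vertex-disjoint. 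This proves the lemma.

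The crux is the first step: the bound $|A|<n/3$ relies on the observation that a heavy subtree cannot be a single leaf, so its border node is an internal degree-three node and the subtree splits into two strictly smaller --- hence non-heavy --- child subtrees whose sizes add up to $|A|$. Once that is in hand the disjointness of $A$ and $B$ comes for free from choosing $B$ inside a branch at $v$ different from the one carrying $A$. The only other thing to note is the small cases: for $n\le 6$ we have $n/6<1$, so the statement is immediate, and only $n\ge 7$ requires the argument above.
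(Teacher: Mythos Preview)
Your argument is correct. The approach differs from the paper's: the paper first locates a centroid node $u$ (where every $t_i(T,u)$ has at most $n/2$ taxa), observes that the two larger branches at $u$ each contain between $n/4$ and $n/2$ taxa, and then walks inward along each branch until the current subtree drops into $[n/6,n/3]$. Your ``minimal-heavy'' framing short-circuits the centroid step entirely: you go straight to a containment-minimal subtree of size $\ge n/6$, and the size bound $|A|<n/3$ falls out immediately from splitting at its border node. The disjoint second subtree then comes from another branch at the same node $v$. This is a cleaner route to the same conclusion; the paper's walk and your minimality argument encode the same descent, but yours avoids the auxiliary centroid construction.

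One small slip: the remark that ``for $n\le 6$ we have $n/6<1$'' is false at $n=6$. This does not matter, since for every $n\le 6$ a single-leaf child subtree already lies in $[n/6,n/3]$ (because $n/6\le 1\le n/3$ once $n\ge 3$), and for that matter your main argument handles $n=6$ as well: a minimal-heavy subtree there is forced to be a single leaf, and $1<n/3=2$ holds directly. You may want to rephrase the small-case sentence accordingly.
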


\begin{proof}
We first show there exists a node $u$ where all $t_i(T,u)$ have at most $n/2$ taxa.  Pick an internal node $u$ in $T$; if all $t_i(T,u)$ have at most $n/2$ taxa, we are done.  Otherwise, move to the its neighbour in the $t_i(T,u)$ with the most taxa.  This process terminates at a node $u$ satisfying the property. Let $n_1 \le n_2 \le n_3$ be the numbers of taxa in the trees $t_i(T,u)$ at some step. If $n_1>\frac{n}{2}$, we move to the neighbour $u^*$ of $u$ in $T_1$; trees $T_i(T,u^*)$ have $n_{11},n_{12}$ and $n_2+n_3$ taxa respectively where $n_{11},n_{12}$ are the numbers of taxa in the subtrees $T_{11},T_{12}$ of $T_1$ created by removing $u^*$. Since $n_2+n_3<\frac{n}{2}$, the component with size over $\frac{n}{2}$ must be either $T_{11}$ or $T_{12}$, which are smaller than $T_1$  since they are its subtrees.

Now, consider the node $u$ we have found by this process, and let $t_1$ and $t_2$ be the two largest $t_i(T,u)$ subtrees, both of which have between $n/4$ and $n/2$ taxa.  If $t_1$ has more than $n/3$ taxa, consider the three child subtrees in $t_1$ of the neighbour of $u$ in $t_1$; one has zero taxa, so the larger must have at least $n/6$ taxa.  If this tree has at most $n/3$ taxa, we have found our subtree $A$; if not, we move one step more away from $u$ until we find a subtree small enough.  We analogously find $B$ as a subtree of $t_2$.
\end{proof}

\begin{lemma}
\label{tree-height-blah}
The number of node queries asked by the phylogeny algorithm to assign taxon $\pi_{i+1}$ to its place in the tree is at most $37(\log_{6/5}i) \approx 203 \ln i$, with probability $1-o(1/i^4)$, and at most $37(\log_{6/5}n)$ with probability $1-o(1/n^4)$.
\end{lemma}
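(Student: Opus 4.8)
The plan is to bound the depth of the leaf of $Y(T_i)$ into which $\pi_{i+1}$ is inserted, since the number of node queries made is that depth plus $O(1)$ (the depth counts one query per internal node on the search path, plus one query to orient at the leaf). First I would condition on the set $\{\pi_1,\dots,\pi_i\}$, on the phylogeny $T_i$ these induce, and on the edge $e$ of $T_i$ into which $\pi_{i+1}$ goes; the only remaining randomness is the uniformly random order in which the first $i$ taxa were inserted. The structural observation to record is that $Y(T_i)$ is completely determined by $T_i$ and this order: for a node $y$ with region $r(y)$ (a subtree with at most two border nodes and, say, $m$ taxa), the algorithm's choice $s(y)$ is exactly the internal node of $r(y)$ it creates \emph{first}, which is the Steiner point in $r(y)$ of the first few taxa that ever fall inside $r(y)$, taken together with the border nodes of $r(y)$. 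Thus the search tree behaves like a random search tree (treap) built on $T_i$, with an internal node's ``priority'' being its creation time.

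Next I would walk the root-to-leaf path $y_0=w,y_1,\dots,y_d$ leading to $e$, tracking $m_j:=|r(y_j)|$, so $m_0=i$ and $r(y_{j+1})$ is one of the three taxon-partitioning parts into which $s(y_j)$ splits $r(y_j)$. I split the path into phases according to which interval $\big((5/6)^{k+1}i,(5/6)^k i\big]$ contains $m_j$; there are at most $\lceil\log_{6/5} i\rceil$ phases. The core claim is that each step ends its phase (drives $m_{j+1}\le\tfrac56 m_j$), or at worst does so within $O(1)$ steps, with probability at least some absolute constant $p>0$, and that by exchangeability of the random order restricted to the taxa currently in $r(y_j)$ these phase-advancing events carry fresh randomness and may be treated as independent across steps. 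To prove the claim I would invoke the preceding lemma: $r(y_j)$ contains disjoint subtrees $A,B$ of the form $t_k(r(y_j),v)$, each with between $m/6$ and $m/3$ taxa. Since $A$ and $B$ are disjoint, $s(y_j)$ lies in at most one of them, so at least one of $A,B$ sits entirely inside one of the three parts; if the descent toward $e$ does not enter that part, then $m_{j+1}\le m-m/6=\tfrac56 m$. One then shows that with constant probability over the first few taxa entering $r(y_j)$ the descent indeed separates off $A$ or $B$: in the ``concentrated'' case, where $A$ or $B$ is a single hanging subtree, this occurs whenever an early taxon lands in it; in the ``spread-out'', caterpillar-like case one falls back on the classical random-BST estimate that a uniform pivot lands in the middle half of an interval with probability $\tfrac12$, so a constant fraction of positions yield a $\tfrac34$-balanced split.

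Finally, having reduced the path length to a quantity stochastically dominated by a sum of at most $\lceil\log_{6/5}i\rceil$ independent random variables of constant mean, a Chernoff bound for such sums gives $\Pr[d>37\log_{6/5}i]=o(1/i^4)$; here the constant $37$ is picked precisely so the exponential rate of the Chernoff bound, which is $\Theta(\log_{6/5}i)$, exceeds $4\ln i$. Replacing $i$ by $n$ (and using $\log_{6/5}i\le\log_{6/5}n$) yields the $o(1/n^4)$ version, and for the finitely many small $i$ the bound is trivial since the depth is then $O(1)$; the single query at the leaf is absorbed into the additive constant. The step I expect to be the main obstacle is the middle one: turning the purely structural guarantee of the preceding lemma (a balanced split \emph{exists}) into a quantitative statement about the \emph{random} split point the algorithm actually picks, handling the concentrated and spread-out regimes uniformly, bookkeeping the border nodes, and verifying that conditioning on the path so far leaves the insertion order inside the current region exchangeable.
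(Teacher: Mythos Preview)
Your overall architecture---partition the root-to-leaf path into at most $\lceil\log_{6/5}i\rceil$ phases according to how many taxa remain in $r(y_j)$, show each phase has length stochastically bounded by an $O(1)$-mean variable, then apply a Chernoff/negative-binomial tail bound---is exactly the paper's. The divergence, and the real gap, is in the ``core claim.'' You try to argue that a single step (or $O(1)$ steps) advances the phase with some absolute probability $p$, by noting that one of $A,B$ sits wholly in one of the three parts at $s(y_j)$ and hoping the descent goes elsewhere. But the descent direction is dictated by the fixed target edge $e$, not by the randomness in $\pi$; if $e$ happens to lie in the part containing all of $B$, you \emph{always} descend into that part and never ``separate off $B$'' at that step. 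Your concentrated/spread-out dichotomy does not address this, and the appeal to a random-BST pivot landing in the middle half presupposes a linear order that an arbitrary phylogeny need not have. You flag this step as the main obstacle, and indeed it is: as written the argument does not establish the constant-probability progress you need.

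The paper closes the gap by a different device that avoids reasoning about the direction of descent altogether. Within a phase it watches the sequence of taxa (in $\pi$-order) that successively determine the split points $s(y_j)$, and waits for three events: first a taxon from $A$ appears (geometric with success probability $\ge 1/6$, since $|A|\ge m/6$ and at least $5m/6$ taxa remain), then a taxon from $B$ (again $\ge 1/6$), then a taxon from neither $A$ nor $B$ ($\ge 1/3$, since $|A|+|B|\le 2m/3$). After the first two events the current region has one border inside $A$ and one inside $B$; after the third, $s(y_j)$ must lie on the $A$--$B$ path, because otherwise one of its three child regions would acquire a third border node, violating the search-tree invariant. At that point every child region has at most $5m/6$ taxa, so the phase ends after one more step, irrespective of where $\pi_{i+1}$ belongs. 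Hence each phase is dominated by $G(6)+G(6)+G(3)+1$, the total path length by a sum of $3\log_{6/5}i$ i.i.d.\ $G(6)$'s plus $\log_{6/5}i$, and the Chernoff bound $\Pr[Q(3\log_{6/5}i,6)>36\log_{6/5}i]\le i^{-4}$ yields the stated $37\log_{6/5}i$ with probability $1-o(1/i^4)$. Replacing the per-step shrinking claim in your outline with this three-wait argument (and the border-node invariant that forces $s(y_j)$ onto the $A$--$B$ path) would complete your proof.
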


\begin{proof}
Consider the process of adding $\pi_{i+1}$ to the tree.  We consider a sequence $y_1\ldots y_k$ of nodes in the search tree $Y$, each corresponding to a subtree $r(y_j)$ of the existing phylogeny.  We divide the $y_j$ into phases: phase $t$ corresponds to the period in which $r(y_j)$ contains between $\frac{5}{6}^ti$ and $\frac{5}{6}^{t-1}i$ leaves; after $\log_{6/5}i$ phases, the algorithm has found where to put $\pi_i$.  We show that the distribution of the length of each phase is bounded above by the sum of three geometrically-distributed random variables.

Each phase corresponds to taking a subtree and shrinking it by a factor of 5/6.  This happens either if the largest of the three subtrees of the phylogeny descendant from the current search tree node $y_j$ has at most 5/6 of the number of taxa we had at the beginning of the current phase, or if $\pi_i$ belongs in a tree with fewer than that many taxa.  We concern ourselves only with the first of these ways of ending a phase, so we upper bound the length of a phase.  

The queries asked include taxa found in $r(y_j)$, in the order that they occur in permutation $\pi$.  In particular, we will ask a node query including a node of $A$ with probability at least $1/6$ at step, independently, until we finally do ask a query of a node from $A$.  (Since our queries always include at least $5/6$ of the taxa, and we have not queried any members of $A$, we always have all members of $A$ available.)  After querying a member of $A$, for the phase to continue, we must choose the subtree containing all of $B$.  Now, we ask queries corresponding to the current subtree, until we see a taxon from $B$, which will happen with probability $1/6$ or greater at each step.  Now, we arrive in a state where the current subtree of the phylogeny includes border nodes inside $A$ and $B$, since we must have cut off parts of $A$ and of $B$, but cannot have cut off all of either without ending the phase.  Now, we ask queries until we see a node from neither $A$ nor $B$; this happens with probability at least $1/3$ at each step. Then, the current search tree node $y_j$ must correspond to a node on the edge from $A$ to $B$ in the phylogeny, since otherwise one of its subtrees would have three border nodes.  

Thus, the length of a phase is at most the sum of three geometric random variables, with expectations $6$, $6$ and $3$; we then move to a new tree with at most $5/6n$ taxa.  However, it may have two border nodes as well; we label these with a taxon from their neighbouring subtrees (thereby adding two taxa to the current subtree) and perform a single quartet query (removing at least two taxa).   This gives a new subtree in which we can perform the next phase.

Thus, if $G(i)$ are independent geometric random variables with mean $i$, then the length of one phase is bounded above by $G(6)+G(6)+G(3)+1$, and the expected total number of queries is at most $19(\log_{6/5}i)+1$, where for simplicity, we let the $G(i)$ all have mean 6.  

Moreover, this variable is rarely above $37 \log_{6/5}i$.  In particular, let $Q(n,r)$ be the negative binomial random variable that is the sum of $n$ geometrically distributed variables with mean $r$.  Then $\Pr[Q(n,r) > knr] = \Pr[B(knr,1/r) < n]$, where $B(n,p)$ is a binomial random variable that results from the sum of $n$ independent Bernoulli trials, each with mean $p$.  By standard Chernoff methods (\cite{Panconesi}, p. 6), this probability is bounded above by $\exp(\frac{-kn(1-1/k)^2}{2r})$.  So, $\Pr[Q(3\log_{6/5}i,6)>36\log_{6/5}i] \le i^{-4}$, meaning that the probability we use more than $37 \log_{6/5} i$ queries for taxon $\pi_{i+1}$ is $o(1/i^4)$; similarly, the probability that we use more than $37 \log_{6/5}n$ queries for taxon $\pi_{i+1}$ is $o(1/n^4)$.
\end{proof}

We emphasize that $Y(T)$ is almost surely balanced regardless of the topology of $T$. Even if the diameter of $T$ is $\Theta(n)$, its corresponding search tree almost surely has height $O(\log n)$. We conjecture that the actual values of the constants are much smaller than mentioned in the above lemma.

\section{Accounting for errors}
\label{sec-errors}

Our search tree algorithm adapts to the case of error-prone quartets where each quartet query independently errs with probability $p>0$. We assume that $(1-p)^3>0.5+\epsilon$ for some $\epsilon>0$; we relax this assumption at the end of the section.

\subsection{Random walk in the search tree}

Let $Y(T')$ be a complete search tree for $T'$ and let $x$ be a taxon not in $T'$.  We will perform a random walk on $Y(T')$ to place $x$ into its proper place in $T'$, where each step of the random walk is determined by at most 3 quartet queries.

Let $y_i$ be the location of the random walk after $i$ steps, with $y_0$ the root of $Y(T')$. If $y(i)$ is not a leaf node, query the border nodes of $r(y_i)$. If any border node queries gives answer $x \notin r(y_i)$, go to the parent node of $y_i$. If all border nodes give answers consistent with $x \in r(y_i)$, query the node $y_i$ and descend to the child of $y_i$ indicated. 

If $y_i$ is a leaf, corresponding to an edge of $T'$, let it have counter variable $c$ initially set to $0$. Query its border nodes as before; if each is consistent with $x \in r(y)$, increment $c$. Otherwise, decrement it if it is greater than $0$; if $c =0$, move to the parent node of $y$.  After a number of queries we will soon compute, we are at a node in $Y(T')$: if it is a leaf, add $x$ to that node of the search tree as for the insertion algorithm with error-free data.  If not, signal failure.

The algorithm finds the proper place in the tree with high probability.  Let $y_x$ be the leaf in the search tree where we should insert taxon $x$.  After $i$ steps in the random walk, let the random variable $d_i$ be the distance in the search tree between $y_i$ and $y_x$.  Let the random variable $g_i$ have value $-c$ if $y_x = y_i$, $d_i+c$ if $y_x \neq y_i$ and $y_i$ is a leaf of $Y(T')$, and $d_i$ if $y_i$ is not a leaf.  If $g_i \le 0$, then the current node of the random walk is the correct place to put $x$. The following simple observation is essential to proving the correctness of our algorithm.

\begin{lemma}
\label{chernoff}
Consider the random variables $g_i$ defined above.  
\begin{enumerate}
\item $E[g_i] \leq d_0+(1-2(1-p)^3)i$.
\item If $i>\frac{-d_0}{1-2(1-p)^3}$, then $\Pr [y_i \neq y_x] < \exp(\frac{-(d_0+i(1-2(1-p)^3)^2}{2i})$
\end{enumerate} 
\end{lemma}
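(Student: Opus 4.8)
The plan is to treat $(g_i)$ as a nearest-neighbour random walk on $\mathbb{Z}$ that is biased toward $0$ at every step, establish a one-step drift inequality, and convert it into the two stated bounds by a standard exponential-moment (Chernoff) argument. The central claim, which I would prove first, is: conditioned on the history $\mathcal{F}_i$ of the walk up to time $i$, we have $g_{i+1}=g_i-1$ with probability at least $q:=(1-p)^3$ and $g_{i+1}=g_i+1$ otherwise; in particular $g$ moves by exactly $1$ at each step.

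To see the drift claim, note that the step taken from $y_i$ uses at most three quartet queries: at most two to test the (at most two, by Condition~4 of the search tree definition) border nodes of $r(y_i)$, and at most one node query $N(T',s(y_i),x)$. These queries are independent, so they are all correct with probability at least $(1-p)^3=q$. On the event that they are all correct, the walk moves one step toward the target leaf $y_x$: if $y_x$ does not lie in $r(y_i)$, then the border node of $r(y_i)$ separating $r(y_i)$ from $x$ — which is unique precisely because $r(y_i)$ has at most two border nodes — reports ``$x\notin r(y_i)$'', so we ascend to the parent, one step closer to $y_x$; if $y_x$ lies in $r(y_i)$, then both border queries report ``consistent'', and the node query names the child whose subtree contains $y_x$ when $y_i$ is internal, while when $y_i=y_x$ we increment the counter, turning $g_i=-c$ into $-(c+1)$. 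A short case check — internal $y_i$; leaf $y_i=y_x$; leaf $y_i\neq y_x$ with counter subcases $c>0$ and $c=0$ — shows that in every configuration a single erroneous query changes $g$ by at most $+1$, which gives the claim.

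Part~1 is then immediate: since the root of $Y(T')$ is an internal search-tree node (it has three children once $T'$ has three or more taxa), $g_0=d_0$, and the drift claim gives $E[g_{i+1}-g_i\mid\mathcal{F}_i]\le(1-q)(+1)+q(-1)=1-2q=1-2(1-p)^3$, so summing telescopes to $E[g_i]\le d_0+i(1-2(1-p)^3)$. For part~2, observe from the definition that $y_i=y_x$ exactly when $g_i\le 0$, so $\{y_i\neq y_x\}=\{g_i\ge 1\}\subseteq\{g_i\ge 0\}$. Writing $D_j:=g_{j+1}-g_j\in\{-1,+1\}$ with $\Pr[D_j=+1\mid\mathcal{F}_j]\le 1-q$, and using that $E[e^{\lambda D_j}\mid\mathcal{F}_j]$ is increasing in $\Pr[D_j=+1\mid\mathcal{F}_j]$ for $\lambda>0$, we get $E[e^{\lambda D_j}\mid\mathcal{F}_j]\le(1-q)e^{\lambda}+qe^{-\lambda}\le e^{\mu\lambda+\lambda^2/2}$ with $\mu:=1-2q$, the last step being Hoeffding's lemma for a $\{-1,+1\}$-valued variable of mean $\mu$. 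Iterating the conditioning, $E[e^{\lambda(g_i-d_0)}]\le e^{i(\mu\lambda+\lambda^2/2)}$, so Markov's inequality gives $\Pr[g_i\ge 0]\le e^{\lambda d_0+i\mu\lambda+i\lambda^2/2}$ for every $\lambda>0$. The hypothesis $i>-d_0/(1-2(1-p)^3)$ says precisely that $d_0+i\mu<0$, so $\lambda=-(d_0+i\mu)/i$ is positive and admissible; plugging it in yields $\Pr[y_i\neq y_x]\le\Pr[g_i\ge 0]\le\exp(\frac{-(d_0+i(1-2(1-p)^3))^2}{2i})$.

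The main obstacle is the drift claim of the second paragraph: one has to verify, across all positions of $y_i$ relative to $y_x$ and all states of the leaf counter $c$, that a round of at most three correct queries always decreases $g$ by exactly one and that any error increases it by at most one, and to pin down exactly where the two-border-node invariant is used — namely, in guaranteeing that a single border node query ``detects'' when $x$ lies outside $r(y_i)$. Once the drift claim is in hand, parts~1 and~2 are routine drift and Chernoff/Hoeffding computations.
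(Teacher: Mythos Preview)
Your proposal is correct and follows the same approach as the paper: establish a one-step drift bound of $1-2(1-p)^3$ by observing that at most three independent queries are made per step and all-correct implies $g$ decreases, then invoke linearity of expectation for part~1 and a Chernoff/Hoeffding tail bound for part~2. The paper's proof is extremely terse (it simply asserts ``follows from the Chernoff bound, as the queries are independent'' for part~2), whereas you carry out the case analysis for the drift claim and the exponential-moment optimization explicitly; your additional verification that $g$ changes by exactly $\pm 1$ at every step is what makes the Hoeffding step clean, and your observation that $\{y_i\neq y_x\}=\{g_i\ge 1\}$ is the link the paper leaves implicit.
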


\begin{proof}
At each step of the random walk, there are at most two border nodes, so at most three queries. If each gives a correct answer, $g_i$ decreases by 1; if any incorrect queries occur $g_i$ increases by at most one, though it might still decrease by 1. In the worst case, the probability that $g_i$ decreases is at least $(1-p)^3$, so 
$E[g(i+1)-g(i)]\le-(1-p)^3+(1-(1-p)^3)=1-2(1-p)^3$.
The result follows from linearity of expectation, since $g_0 = d_0$.
The second claim follows from the Chernoff bound, as the queries are independent.
\end{proof}

Now, we have a straightforward taxon insertion algorithm.  For each taxon $\pi_{i+1}$, we run the random walk long enough to handle the case that $g_0 = 203 \ln i$.  To make the error probability at most $(1/i^2)$, we require that the random walk have $j$ steps, where  $\exp(\frac{-(203\ln i +j(1-2(1-p)^3))^2}{2j}) \le \frac{1}{i^2}$.  The minimum value of $j$ to make this guarantee is  $j \ge k \ln n$, for $k = \frac{-203(1-2(1-p)^3)+2+2\sqrt{1-203((1-2(1-p)^3)}}{(1-2(1-p)^3)^2}$.

We can now state the taxon insertion procedure in detail.

\begin{algorithm}

\caption{InsertTaxon($x,T,Y(T)$)}
\label{a1}
\begin{algorithmic}
\STATE Initialize the random walk at the root of $Y(T)$.
\FOR{$i=1$ to $k \log n$}
	\STATE Simulate the next step of the random walk.
\ENDFOR
\STATE Let $y_{k \log n}$ be the current node of the random walk.
\IF{$y_{k \log n}$ is a leaf}
	\STATE Attach $x$ to $r(y_{k \log n})$ in $T$ and update $Y(T)$.
\ELSE
	\RETURN Failure.
\ENDIF

\end{algorithmic}
\end{algorithm}

Assuming that the tree $T_{i-1}$ is correct, then, this algorithm adds a new taxon in $O(\log i)$ queries, with error or failure probability $O(1/i^2)$.  

\subsection{Finding quartets to ask}
We must ensure that we can always find a quartet that has not been queried before in $O(1)$ time.  This requires two separate conditions to hold: first, that enough such quartets exist, and second, that we can find them in $O(1)$ time.

The first of these is easy, as long as we start with a constant-sized guide tree $T_S$ on a set $S$ of at least $m$ taxa, where $m$ is the smallest number such that $k \log m < m-2$, with $k$ equal to the multiple of $\log i$ found using the formula in the previous section.  In each insertion phase, we use at most $k \log i$ quartets at any node of the search tree; the extreme case is where the three child subtrees of the current tree $T$ have 1, 1, and $i-2$ taxa in them.

The latter is more complicated. Assume that for each node $y$ in $Y$, $\ell_j(y)$ is the list of all taxa in the child subtree $t_j(r(y),s(y))$ (for $j=1,2,3$). To find the next quartet in $O(1)$ time, we must fetch the next taxon in $t_j(T,s(y))$ in $O(1)$ time. We first enumerate taxa in $\ell_j(y)$. Once all taxa in $\ell_j(y)$ have been used, we pick the border node $b_j(y)$ of $y$ in $t_j(T,s(y))$ (if it exists). The node $b_j(y)$ is associated with some ancestor $y_1$ of $y$ and we have $r(y) \subseteq t_i(r(y_1),b_j(y))$ for some $i$. Taxa in $\ell_{(i+1) mod 3}(y_1) \cup \ell_{(i+2) mod 3}(y_1)$ are also in $t_j(T,s(y))$ so we enumerate them. Once they have been used, we find border nodes of $r(y_1)$ such that two of their taxa lists contain taxa in $t_j(T,s(y))$ that have not been used so far. Once all taxa from a node $y_i$ have been used, we look at border nodes of $r(y_i)$. This process can be thought of as breadth first search on a directed graph where an arc denotes the relationship of being a border node. We leave details to the longer version of this paper.



Now, we give the complete algorithm.  First, pick a constant-sized set $S \subset {\mathcal S}$ of $m$ taxa and find the phylogeny for $S$ consistent with the most quartets. Then iteratively add taxa to the tree using the procedure InsertTaxon described above.

\begin{algorithm}

\caption{Reconstruct($\mathcal{S}$,m)}
\label{a2}
\begin{algorithmic}
\STATE Pick a subset $S \subset \mathcal{S}$ with $m$ taxa
\STATE Find phylogeny $T$ on $S$ consistent with the most quartets by exhaustive search.
\STATE Build a search tree $Y(T)$ for $T$.
\FORALL{$s \in \mathcal{S} \backslash S$}
	\STATE insertTaxon(s,T,Y(T))
\ENDFOR

\end{algorithmic}
\end{algorithm}

The running time of this algorithm is $O(n \log n)$ with high probability. The error probability can be bounded by $\mu(m)+\sum_{i=m}^n \frac{1}{i^2}$, where $\mu(m)$ is the probability that the maximum quartet compatibility tree on a random set of $m$ taxa is not consistent with $T$.  This quantity is constant for constant $m$; in the next section we show how to make the total error probability $o(1)$ as $n$ grows.

The remaining case where $(1-p)^3 \leq \frac{1}{2}$ can be solved by redefining node queries. Each node query is now implemented by asking $c_p$ queries and returning the majority direction, with constants $c_p$ and $C$ chosen appropriately. We defer details to the longer version of this paper. 
\section{Shrinking the error probability to $o(1)$}

The algorithm presented in the previous section errs with constant probability, since it starts with a constant-sized tree that may have errors, and since the additions to this tree also have constant probability of error.

If we start with a non-constant-sized guide tree, we can reduce the error probability.  The main lemma is in the next subsection.

\begin{theorem}
The algorithm Reconstruct$(\mathcal{S},\max(\lceil \log\log n \rceil,m)$ both returns the correct tree and runs in $O(n \log n)$ time with probability $1-o(1)$.
\end{theorem}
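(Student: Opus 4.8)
The running time is the easy half, so I would dispose of it first; in fact it is (essentially) deterministic. Constructing the guide tree on $m=\max(\lceil\log\log n\rceil,m_0)$ taxa by exhaustive search examines all $(2m-5)!!\le(2m)^m=2^{O(\log\log n\cdot\log\log\log n)}=n^{o(1)}$ labelled topologies and scores each against the $\binom m4$ observed quartets in $\mathrm{poly}(m)$ time, for a total of $n^{o(1)}=o(n)$. The $n-m$ subsequent calls to InsertTaxon each run a fixed $O(\log n)$ random-walk steps at $O(1)$ cost per step (at most three queries, plus the $O(1)$-time next-quartet lookup of Section~5.2 and the $O(1)$-time updates to $Y$), so the algorithm runs in $O(n\log n)$ worst-case time. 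I would also note that enough un-queried quartets are always available: at insertion $i$ the worst search-tree node has child-subtree sizes $1,1,i-2$, supplying $i-2>k\log i$ quartets for all $i\ge m$, which is exactly why $m$ was chosen so that $k\log m<m-2$ (the gap only widens for larger $i$). Hence all the content is in the error bound, and the ``$1-o(1)$'' really modifies ``returns the correct tree''.

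Conditioning on each intermediate tree being correct and taking a union bound over the insertion phases, the error probability is at most $\mu(m)+\sum_{i=m}^{n}O(1/i^{2})$, where $\mu(m)$ is the probability that the maximum-quartet-compatibility tree on a uniformly random set of $m$ taxa differs from the restriction $T|_{S}$ of the true tree. The tail sum is $O(1/m)=O(1/\log\log n)=o(1)$, so I would reduce the theorem to proving $\mu(m)\to 0$; I would actually aim for the stronger $\mu(k)=e^{-\Omega(k)}$, which yields $\mu(\lceil\log\log n\rceil)=(\log n)^{-\Omega(1)}=o(1)$.

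To bound $\mu(k)$ I would union-bound over the trees $T'\neq T|_S$ on the $k$ sampled taxa, \emph{stratified by their quartet distance} $d=d_q(T',T|_S)$. For a fixed such $T'$, the score difference $\mathrm{sc}(T|_S)-\mathrm{sc}(T')$ is a sum of $d$ independent $\{-1,0,1\}$-valued variables, one for each quartet on which the two trees disagree, each equal to $+1$ with probability $1-p$ (the observed quartet is the true one) and to $-1$ with probability at most $p$ (it is $T'$'s topology); since $p<\tfrac12$ each has mean at least $1-2p>0$, so Hoeffding gives $\Pr[\mathrm{sc}(T')\ge\mathrm{sc}(T|_S)]\le\exp(-c_p d)$ with $c_p=(1-2p)^2/2>0$. (If $(1-p)^3\le\tfrac12$, or more generally if the tree-count growth rate below does not beat $c_p$, I would first replace each guide-tree query by a majority vote over a constant number of repetitions: still $O(1)$ time, but now the effective per-query error is an arbitrarily small constant.) On the combinatorial side I would use that distinct $k$-leaf binary trees have quartet distance at least $k-3$, and that a tree within quartet distance $d$ of $T|_S$ differs from it in only $O(d/k)$ splits, so there are $\exp\!\big(O(\tfrac{d\log k}{k})\big)$ ways to choose which splits change and how the altered region is resolved. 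Putting these together, for $k$ large,
\[
\mu(k)\ \le\ \sum_{d\ge k-3}\exp\!\Big(O\big(\tfrac{d\log k}{k}\big)-c_p d\Big)\ \le\ \sum_{d\ge k-3}e^{-c_p d/2}\ =\ O\!\big(e^{-c_p k/2}\big).
\]

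The step I expect to be the real obstacle is this combinatorial count. A blind union bound over all $(2k-5)!!=e^{\Theta(k\log k)}$ trees against the per-tree estimate $e^{-\Omega(d)}\ge e^{-\Omega(k)}$ loses by a $\log k$ factor, so the distance stratification is essential, and it rests entirely on a clean proof that quartet distance $d$ forces only $O(d/k)$ differing splits --- equivalently, that each split present in $T|_S$ but absent from $T'$ costs $\Omega(k)$ disagreeing quartets. The remaining pieces --- the runtime bookkeeping, the per-tree Hoeffding estimate, and verifying that the single choice $m=\Theta(\log\log n)$ simultaneously kills $\mu(m)$ and the $\sum 1/i^2$ tail --- I regard as routine.
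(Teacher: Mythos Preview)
Your overall framework matches the paper exactly: the runtime analysis (which, as you note, is deterministic once the random-walk length is fixed at $k\log n$ steps) and the decomposition of the error into $\mu(m)+\sum_{i\ge m}O(1/i^{2})=\mu(m)+O(1/\log\log n)$ are precisely what the paper does. The substantive difference is in how you propose to show $\mu(k)\to 0$.

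You aim for the stronger $\mu(k)=e^{-\Omega(k)}$ via a fine stratification by quartet distance $d$, which rests on the combinatorial claim that at most $\exp(O(d\log k/k))$ trees lie within quartet distance $d$ of $T|_S$. You rightly flag this as the obstacle, and your ``equivalently'' is not quite an equivalence: each differing split contributing $\Omega(k)$ conflicting quartets does \emph{not} immediately give $d_Q\ge\Omega(k\cdot\mathrm{RF})$, since the conflicting-quartet sets for different splits may overlap heavily. So even the reduction you outline would need more work.

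The paper sidesteps this entirely with a cruder two-bucket argument. Split the competing trees into ``close'' ($d_Q<k\log^{2}k$) and ``far'' ($d_Q\ge k\log^{2}k$). For far trees the trivial count $2^{k}k!<2^{k(1+\log k)}$ is crushed by the per-tree Chernoff bound $\exp(-c_p\,k\log^{2}k)$. For close trees the paper proves, via an iterated split-matching/taxon-deletion argument, that any tree within quartet distance $k\log^{2}k$ of $T^{*}$ is reachable from $T^{*}$ by $O(\log^{4}k)$ single-taxon reinsertions, so there are at most $k^{O(\log^{4}k)}$ of them; combined with $d_Q\ge k-3$ for \emph{any} wrong tree and the per-tree bound $\exp(-c_p(k-3))$, this bucket also vanishes. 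This yields only $\mu(k)=o(1)$ rather than your exponential rate, but that is all the theorem needs, and it avoids the RF-versus-quartet-distance inequality altogether. If you want to push your route through, the natural intermediate target is a taxon-move bound (trees at quartet distance $\le d$ are reachable in $O(d/k)$ TR moves), which is both what the paper's lemma actually controls and enough to give your $\exp(O(d\log k/k))$ count.
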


\begin{proof}
The exhaustive search step requires enumerating all $O((\log\log n)^4)$ quartets, on all $O((\log \log n)!\log n)$ topologies on $\log \log n$ taxa; the product of these is $O((\log \log n)^{4 + \log \log n}\log n)$, which is sublinear in $n$.  We have already shown that the rest of the algorithm requires $O(n \log n)$ time with high probability.

We will show below that $\mu(\log \log n)$, the failure probability of the guide tree algorithm, is $o(1)$.  The failure probability of the insertion procedure is at most $\sum_{i=\log\log n}^{n} \frac{1}{i^2}$, which is $O(\frac{1}{\log\log n})$, and so $o(1)$.  As such, the overall failure probability is $o(1)$, as desired.
\end{proof}

We note that the guide tree could have more or fewer than $\log \log n$ taxa; we merely require that the brute force guide tree construction requires $O(n \log n)$ time and has $o(1)$ error probability.

\subsection{Maximum quartet consistency is consistent}

Here, we show that the {\it maximum quartet consistency} approach is consistent for our error model.  This result (which may be of independent interest, as our error model has been studied before \cite{Lin}), shows that $\mu(n) \rightarrow 0$ as $n$ grows.

\begin{theorem}
\label{consistency-thm}
Let $T_{mqc}$ be the phylogeny compatible with the most quartet queries for a set of $n$ taxa and let $T^*$ be the true phylogeny. If each quartet query errs independently with probability $p$, then $\mu(n) = \Pr[T_{mqc}\neq T^*]  = o(1)$ as $n \rightarrow \infty$.
\end{theorem}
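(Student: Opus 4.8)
The plan is to prove that, with probability $1-o(1)$, $T^{*}$ is the \emph{unique} phylogeny agreeing with the largest number of observed quartet queries. For a phylogeny $T$ on the $n$ taxa let $A(T)$ be the number of $4$-subsets on which the observed query agrees with the topology that $T$ induces, let $D(T)$ be the set of $4$-subsets on which $T$ and $T^{*}$ induce different topologies, and put $d(T)=|D(T)|$. Outside $D(T)$ the two trees agree, so those subsets contribute equally to $A(T)$ and $A(T^{*})$, whence $A(T^{*})-A(T)=\sum_{S\in D(T)}Z_{S}$ with $Z_{S}=\mathbf{1}[\mathrm{obs}_{S}=T^{*}|_{S}]-\mathbf{1}[\mathrm{obs}_{S}=T|_{S}]\in\{-1,0,1\}$. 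Since distinct quartet queries are independent, so are the $Z_{S}$ for $S\in D(T)$; and $E[Z_{S}]=(1-p)-\Pr[\mathrm{obs}_{S}=T|_{S}]\ge 1-2p=:\delta$, which is positive under the standing hypothesis $p<1/2$ (implied by $(1-p)^{3}>1/2$, and the majority-vote trick for $(1-p)^{3}\le\frac12$ mentioned above simply drives the effective error rate below $1/2$). Hoeffding's inequality then gives $\Pr[A(T)\ge A(T^{*})]=\Pr\big[\sum_{S\in D(T)}Z_{S}\le 0\big]\le\exp(-\delta^{2}d(T)/2)$.

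A union bound over $T\ne T^{*}$ yields $\mu(n)\le\sum_{T\ne T^{*}}\exp(-\tfrac{\delta^{2}}{2}d(T))$ (note that if $T_{mqc}\ne T^{*}$ then $T_{mqc}$ witnesses the event in question), which I would evaluate by grouping trees according to $d(T)$ using two combinatorial facts about quartet distance. First, any two distinct binary phylogenies on $n$ leaves disagree on at least $n-3$ quartets, so $d(T)\ge n-3$ for every competing $T$; in particular the smallest exponent that occurs is $\tfrac{\delta^{2}}{2}(n-3)\to\infty$. Second --- the crux --- I would bound $N_{\le d}$, the number of phylogenies at quartet distance at most $d$ from $T^{*}$, by $n^{O(d/n)}$: the point is that each internal bipartition of $T^{*}$ that is absent from $T$ forces $\Omega(n)$ of the $4$-subsets to be resolved differently, with these contributions essentially disjoint across distinct absent bipartitions, so $d(T)\le d$ forces $T$ and $T^{*}$ to share all but $r=O(d/n)$ of their $n-3$ internal bipartitions, and the number of binary trees obtained from $T^{*}$ by contracting $r$ edges and re-resolving them is at most $\binom{n-3}{r}\,n^{O(r)}=n^{O(d/n)}$.

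Combining the two, $\mu(n)\le\sum_{d\ge n-3}N_{\le d}\,e^{-\frac{\delta^{2}}{2}d}$; substituting $N_{\le d}\le n^{O(d/n)}$ bounds the $d$-th term by $\exp\!\big(-d(\tfrac{\delta^{2}}{2}-O(\tfrac{\log n}{n}))\big)$, so once $n$ is large enough that $O(\tfrac{\log n}{n})<\tfrac{\delta^{2}}{4}$ the sum is dominated by a geometric series with leading term $\exp(-\tfrac{\delta^{2}}{4}(n-3))\to 0$, giving $\mu(n)=o(1)$.

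I expect the combinatorial bound $N_{\le d}\le n^{O(d/n)}$ to be the main obstacle --- specifically, making rigorous that absent bipartitions contribute $\Omega(n)$ quartet disagreements in an \emph{additive} fashion, so that being quartet-close to $T^{*}$ really does force few topological differences. Near-leaf bipartitions (one side of size close to $2$) are the delicate case, since each forces only $\Theta(n)$ disagreements: one must rule out a tree being made quartet-close to $T^{*}$ by disturbing many of them with heavily overlapping disagreement sets, for which exhibiting a system of distinct ``witness'' quartets, one per absent bipartition, should suffice. Minor points are that only the inequality $\Pr[\mathrm{obs}_{S}=T|_{S}]\le p$ is needed (how the error mass splits between the two wrong topologies is irrelevant) and a routine check of the $(1-p)^{3}\le\frac12$ reduction.
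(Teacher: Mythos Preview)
Your probabilistic skeleton---Hoeffding on $\sum_{S\in D(T)}Z_{S}$ followed by a union bound grouped by quartet distance---is exactly what the paper does, with the same exponent $\exp(-\tfrac{(1-2p)^2}{2}d(T))$ and the same use of the lower bound $d(T)\ge n-3$. The divergence is entirely in how the competing trees are counted, and there your proposal has a real gap. You want $N_{\le d}\le n^{O(d/n)}$ via ``small quartet distance $\Rightarrow$ few absent bipartitions,'' but the argument you sketch does not deliver this: exhibiting \emph{one} witness quartet per absent bipartition only gives $r\le d$, not $r\le O(d/n)$; you would need $\Omega(n)$ disjoint witnesses per absent bipartition, and the additivity you flag as delicate is genuinely unclear (moving a single leaf across $k$ edges already kills $k$ bipartitions with heavily overlapping disagreement sets, and while that particular example happens to satisfy $r=O(d/n)$, your sketch gives no mechanism for proving it in general). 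So the lemma you rely on is plausible but unproven, and your outline does not close it.

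The paper sidesteps this by a coarser two-regime split that needs a much weaker count. It defines a \emph{taxon reinsertion} (TR) move---delete a leaf and reattach it elsewhere---and shows, by iteratively matching a bad split of $T$ against a minimizing split of $T'$ and deleting the $O(\log^2 n)$ taxa in the two small blocks, that any $T'$ with $d_Q(T^*,T')<n\log^2 n$ is reachable from $T^*$ by $O(\log^4 n)$ TR moves; since each TR has $O(n^2)$ choices, there are at most $n^{O(\log^4 n)}$ such ``close'' trees. Combining this subexponential count with the exponent $\exp(-\tfrac{(1-2p)^2}{2}(n-3))$ kills the close regime; for the ``far'' regime $d_Q\ge n\log^2 n$ the trivial bound $|\mathcal T|\le 2^n n!$ is crushed by the exponent $\exp(-\tfrac{(1-2p)^2}{2}n\log^2 n)$. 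This avoids any RF--quartet comparison and is what you should aim for if the bipartition route does not pan out.
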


To prove this theorem, we first show a few properties of quartets.

\begin{definition}
The quartet distance $d_Q(T,T')$ of phylogenies $T$ and $T'$ on the same set of taxa is the number of quartets on which $T$ and $T'$ differ.
\end{definition}

This distance was studied in~\cite{brodal,bryant} among others.

\begin{lemma}
\label{dist-lemma}
The quartet distance between distinct phylogenies is at least $n-3$. 
\end{lemma}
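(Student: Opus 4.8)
The plan is to argue by contradiction about the structure of two distinct phylogenies $T$ and $T'$ that differ on fewer than $n-3$ quartets. First I would recall that an unrooted binary tree on $n$ taxa has exactly $n-3$ internal edges, and that each internal edge $e$ of $T$ induces a bipartition (split) $A_e \mid B_e$ of the taxon set. A classical fact I would invoke is that a phylogeny is determined by its set of splits, and in fact by the set of quartets it displays: two trees are equal if and only if they display the same quartets. So if $T \neq T'$, there is at least one split present in $T$ but not in $T'$; I would pick such a split, say the one induced by edge $e$ with bipartition $A \mid B$.

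The key step is a charging argument: I want to exhibit at least $n-3$ distinct quartets on which $T$ and $T'$ disagree. The natural approach is to find, for the offending split $A\mid B$, a large family of quartets $ab|cd$ with $a,b \in A$ and $c,d \in B$ that are displayed by $T$ but not by $T'$. Since $A\mid B$ is not a split of $T'$, I would use the fact that in $T'$ the "convex hulls" (minimal subtrees) spanning $A$ and spanning $B$ must overlap in at least one edge or node; from this overlap one can always extract a taxon $c\in B$ that lies "inside" $A$'s part of $T'$, which forces some quartet of the form $ab|cd$ displayed by $T$ to be displayed differently by $T'$. To get $n-3$ of them rather than just one, I would instead do an inductive / edge-by-edge count: walk along a path in $T$, or contract, so that each of the $n-3$ internal edges of $T$ contributes at least one quartet not correctly displayed by $T'$, and arrange the bookkeeping so these quartets are pairwise distinct (e.g. each is "new" at the edge where it is first produced). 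Alternatively, a clean route is induction on $n$: remove a cherry $\{x,y\}$ from $T$ (two leaves with a common neighbour); if $T'$ also has $\{x,y\}$ as a cherry, the restricted trees $T\setminus y$ and $T'\setminus y$ are still distinct on $n-1$ taxa, giving $(n-1)-3$ disagreeing quartets not involving $y$, and the quartet distinguishing the two trees around the cherry location supplies $\geq 4$ more new quartets involving $y$, comfortably covering the deficit; if $T'$ does not have $\{x,y\}$ as a cherry, then already $\Theta(n)$ quartets of the form $xy|\cdot\,\cdot$ are wrong in $T'$, and a short direct count finishes it.

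I expect the main obstacle to be the case analysis in the inductive step when the chosen cherry of $T$ is not a cherry of $T'$: there I need a clean lower bound on the number of quartets $xy|cd$ that $T$ displays but $T'$ does not, uniformly over all such $T'$, and I must make sure the quartets counted in the base/recursive case and those counted "at the cherry" are disjoint. Handling the degenerate small cases ($n=4$, where the bound is $1$, and $n=5$) separately will pin down the base of the induction. Once the cherry dichotomy is set up carefully, the arithmetic ($(n-1)-3 + (\text{extra}) \geq n-3$) is routine, so the whole argument rests on that one structural counting claim about a mismatched cherry.
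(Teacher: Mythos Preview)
Your approach is workable in principle but takes a much longer route than the paper, and your sketch has a gap in the inductive step.

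The paper's proof is a one-shot counting argument with no induction. Pick a split $(S_1,S_2)$ of $T$ that is not a split of $T'$. Because the nontrivial splits of a binary tree form a maximal pairwise-compatible family, $(S_1,S_2)$ must be \emph{incompatible} with some split $(S_1',S_2')$ of $T'$; that is, all four intersections $A=S_1\cap S_1'$, $B=S_1\cap S_2'$, $C=S_2\cap S_1'$, $D=S_2\cap S_2'$ are nonempty. For any $a\in A,\ b\in B,\ c\in C,\ d\in D$, the tree $T$ displays $ab|cd$ while $T'$ displays $ac|bd$, so all $|A|\,|B|\,|C|\,|D|$ such quartets are conflicting. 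Since $|A|+|B|+|C|+|D|=n$ and each part is at least $1$, the product is minimized at $n-3$ (three parts of size $1$). That is the entire proof.

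Compared with this, your edge-by-edge charging idea (``each of the $n-3$ internal edges of $T$ contributes at least one quartet not displayed by $T'$'') is not obviously correct: many internal edges of $T$ may induce splits that \emph{are} splits of $T'$, and for those edges there is nothing to charge. Your cherry induction is closer to a real proof, but Case~1 as written is incomplete. When $\{x,y\}$ is a cherry in both $T$ and $T'$, the cherry may be attached at the \emph{same} edge in both trees, in which case there is no ``quartet distinguishing the two trees around the cherry location'' at all; the inductive hypothesis then gives only $(n-1)-3=n-4$ conflicting quartets, and you still owe one more involving $y$. This is fixable: since $T\neq T'$, the rooted triples obtained by rooting at $y$ cannot all agree (they determine the tree), hence some quartet containing $y$ differs, and it is automatically disjoint from the $n-4$ quartets not containing $y$. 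But you should say this explicitly rather than rely on the ``cherry location'' picture, which does not apply in this sub-case.

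In short: your plan can be made to work, but the paper's incompatible-splits argument is both shorter and avoids all the case analysis you anticipate as the main obstacle.
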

\begin{proof}
Let $T$ and $T'$ be distinct phylogenies. Let $(S_1,S_2)$ be a split in $T$ not present in $T'$. Let $(S_1',S_2')$ be a split in $T'$ not present in $T$ where none of the sets $A=S_1 \cap S_1',B=S_1 \cap S_2', C=S_2 \cap S_1', D=S_2 \cap S_2'$ is empty; such a split exists since $T$ and $T'$ are distinct. Choose taxa $a,b,c,d$ from sets $A,B,C,D$, respectively.  The quartet induced by $T$ is $ab|cd$, whereas in $T'$ it is $ac|bd$.  This gives $\phi=|A||B||C||D|$ conflicting quartets; $\phi$ is at least $n-3$ since $|A|+|B|+|C|+|D|=n$, and the product is minimized when $|A|=n-3$ and $|B|=|C|=|D|=1$.
\end{proof}

The number of trees with  small quartet distance from a fixed tree $T$ is small. 

\begin{definition}
A {\it taxon reinsertion} (TR) operation consists of deleting a taxon from a phylogeny and attaching it to a remaining edge, creating three new edges.
\end{definition}

\begin{lemma}
Let $T$ and $T'$ be phylogenies such that $d_Q(T,T')<n \log^2 n$. The number of TR operations required to transform $T$ into $T'$ is at most $c\log^4 n$ for some constant $c$.
\end{lemma}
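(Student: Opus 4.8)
The plan is to bound the number of TR operations by relating it to a more combinatorially tractable quantity: the number of splits (bipartitions) of $T$ that are destroyed on the way to $T'$. First I would invoke Lemma~\ref{dist-lemma} in contrapositive form: every split of $T$ that is \emph{not} a split of $T'$ contributes at least $n-3$ conflicting quartets, but more usefully, I want a stronger per-split lower bound. For a split $(S_1,S_2)$ of $T$ absent from $T'$, the refinement argument in the proof of Lemma~\ref{dist-lemma} shows the number of conflicting quartets it induces is $|A||B||C||D|$ for the four nonempty blocks of its common refinement with some $T'$-split; summing a careful version of this over the $\Theta(n)$ internal edges of $T$, and using that $d_Q(T,T') < n\log^2 n$, should force all but $O(\log^2 n)$ of the splits of $T$ to survive into $T'$. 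Concretely, if $k$ splits of $T$ are broken, each breaking the tree into pieces where the smaller side has size $s_j$, the total conflict count is at least (something like) $\sum_j (\text{min side size})\cdot(n-3)$ divided by a constant, so $k = O(\log^2 n)$ and moreover the sum of the minority-side sizes of the broken splits is $O(\log^2 n)$.

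Next I would argue that if the broken splits of $T$ all have small minority side, then only a small set $X$ of taxa need to be moved. Each broken split $(S_j, T\setminus S_j)$ with $|S_j|$ small "explains" a local discrepancy; I would take $X$ to be the union of the minority sides of all broken splits (plus, symmetrically, the minority sides of the splits of $T'$ absent from $T$, which by the same counting are also few and small). Then $|X| = O(\log^2 n)$ by the previous paragraph. The claim is that $T$ restricted to $\mathcal{S}\setminus X$ equals $T'$ restricted to $\mathcal{S}\setminus X$: any split separating two taxa outside $X$ differently in $T$ versus $T'$ would be a split of $T$ broken in $T'$ (or vice versa) whose minority side is not contained in $X$, contradicting the construction of $X$. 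Hence $T$ and $T'$ differ only in the placement of the $O(\log^2 n)$ taxa of $X$, and we can transform $T$ to $T'$ by first removing all taxa of $X$ (at most $|X|$ TR-removals, but each TR is a delete-and-reattach, so really: reinsert each $x\in X$ one at a time onto the correct edge of the common backbone), giving at most $2|X| = O(\log^2 n)$ operations — better than the stated $O(\log^4 n)$, which gives slack.

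The step I expect to be the main obstacle is the quantitative counting in the first paragraph: making precise that $d_Q(T,T') < n\log^2 n$ bounds not just the \emph{number} of broken splits but the \emph{sum of their minority-side sizes} by $O(\log^2 n)$. The subtlety is that a single pair of a $T$-split and a $T'$-split can be "blamed" for conflicts that I might otherwise want to attribute to several different broken splits, so I need to set up the accounting so that each conflicting quartet is charged to at most one (or boundedly many) broken splits, using the nested structure of splits within a tree. One clean way: order the broken splits of $T$ by inclusion along the tree; for each, pick a witnessing $T'$-split and four taxa as in Lemma~\ref{dist-lemma}, chosen so the minority block $A$ has size $\Omega(s_j)$, yielding $\Omega(s_j \cdot n)$ conflicts that are "fresh" because they involve a taxon deep in $S_j$; a disjointness argument then makes the total at least $\Omega(n\sum_j s_j)$, forcing $\sum_j s_j = O(\log^2 n)$. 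If the fresh-charging is only boundedly-overlapping rather than disjoint, we lose a constant, which is harmless, and the $\log^4 n$ (rather than $\log^2 n$) in the statement presumably absorbs exactly this kind of slack plus the possibility that removing taxa from $X$ temporarily creates further small discrepancies that must themselves be repaired.
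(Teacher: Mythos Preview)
Your approach is genuinely different from the paper's and the central step is not established. The paper does \emph{not} try to bound globally the number or total minority size of broken splits. Instead it runs an iterative peeling argument: pick any split $(S_1,S_2)$ of $T$ absent from $T'$, pick a split $(S_1',S_2')$ of $T'$ absent from $T$ that \emph{minimises} $\phi=|A||B||C||D|$; since $\phi\le d_Q(T,T')<n\log^2 n$ and the largest block $A$ has $|A|\ge n/4$, each of $B,C,D$ has $O(\log^2 n)$ taxa. Delete $B\cup C$ from both trees ($O(\log^2 n)$ taxa) and note this kills at least $n-3$ conflicting quartets by Lemma~\ref{dist-lemma}. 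Because the total conflict count is below $n\log^2 n$ and each round removes $\Omega(n)$ of it, there are $O(\log^2 n)$ rounds, hence $O(\log^4 n)$ removed taxa, and that many TR moves suffice. The key point is that the paper never needs a disjointness or charging argument across multiple broken splits: each round uses Lemma~\ref{dist-lemma} once on the \emph{current} pair of reduced trees.

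The gap in your proposal is exactly the step you flagged. You need that the $\Omega(s_j n)$ conflicting quartets witnessed by the $j$th broken split can be made essentially disjoint across $j$, so that $d_Q\ge c\,n\sum_j s_j$. But nested broken splits routinely share their witnessing quartets: in the caterpillar with taxon $1$ moved $k$ positions, the broken splits $\{1,\dots,j\}\mid\{j{+}1,\dots,n\}$ for $j=2,\dots,k-1$ all produce conflicts of the form $\{1,a,b,c\}$, and the natural witness sets overlap heavily unless you engineer the $T'$-split choice very carefully. Your ``fresh because it involves a taxon deep in $S_j$'' heuristic does not come with a proof that such a choice always exists with the required size, nor that the resulting quartet families are pairwise (or boundedly) disjoint across the full collection of broken splits of an arbitrary tree. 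Without that, the inequality $\sum_j s_j=O(\log^2 n)$ is unproven, and with it your bound on $|X|$. The paper's iterative scheme is precisely the device that avoids this accounting: it reduces the problem to one split pair at a time, paying $O(\log^2 n)$ taxa per round but needing only the trivial lower bound $\phi\ge n-3$ at each step. If your charging could be made to work it would yield the stronger $O(\log^2 n)$ bound, but as written it is a hope rather than an argument.
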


\begin{proof}
Let $(S_1,S_2)$ be a split of $T$ not present in $T'$. Let $(S_1',S_2')$ be some split in $T'$ that is not present in $T$ that minimizes $\phi=|A||B||C||D|$ as defined earlier. Without loss of generality, assume that $A$ is the largest of the sets. Observe that each of the sets $B,C,D$ must have at most $\log^2 n$ taxa: otherwise $\phi>n \log^2 n$, so $d_Q(T,T')>n \log^2 n$. We delete all taxa in $B$ and $C$ from both $T$ and $T'$ to create trees $T^{(1)}$ and $T'^{(1)}$. By Lemma \ref{dist-lemma}, this erases at least $n-3$ conflicting quartets. We pick splits $(S_1,S_2)$ and $(S_1',S_2')$ in $T^{(1)}$ and $T'^{(1)}$ as we previously did for the original trees and repeat the process to obtain trees $T^{(2)}$ and $T'^{(2)}$, this time removing at least $n-2\log^2 n-3$ discordant quartets. 

We iterate the process until $T^{(i)}=T'^{(i)}$ for some $i$, which is  $O(\log^2 n)$ since the total number of conflicting quartets is at most $n \log^2 n$, and each iteration erases $\Omega(n)$. The sets $B$ and $C$ have at most $\log^2 n$ taxa at each step of the algorithm. Therefore, at most $O(\log^4 n)$ taxa are deleted from both trees.  

Let $R$ be the taxa removed. The restrictions of both $T$ and $T'$ to ${\mathcal S}-R$ are the same.  To transform $T$ to $T'$, we move all nodes in $R$ to a new side of the tree $T$, and then move each to the proper place in $T'$ in $O(\log^4 n)$ TR operations.
\end{proof}

\begin{corollary}
For any phylogeny $T$, the number of phylogenies $T'$ such that $d_Q(T,T')<n \log^2 n$ is at most $n^{b\log^4 n}$ for a large enough constant $b$.
\end{corollary}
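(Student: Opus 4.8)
The plan is to feed the preceding lemma into a straightforward counting argument. That lemma tells us that every phylogeny $T'$ on the $n$ taxa of $T$ with $d_Q(T,T')<n\log^2 n$ can be obtained from $T$ by a sequence of at most $c\log^4 n$ taxon reinsertion operations. Since each such $T'$ arises as the outcome of at least one such sequence, it is enough to bound the number of distinct trees producible by applying at most $c\log^4 n$ TR operations to $T$, and overcounting (several sequences yielding the same tree) only helps us.

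For the counting step I would use the fact that every phylogeny on these $n$ taxa has exactly $2n-3$ edges, and that a TR operation neither creates nor destroys taxa, so this edge count is preserved throughout the sequence. Hence a single TR operation offers at most $n$ choices of which taxon to delete and at most $2n-3$ choices of which remaining edge to reattach it to, for fewer than $2n^2$ possibilities in total. Consequently the number of phylogenies reachable from $T$ by at most $c\log^4 n$ TR operations is at most
\[
\sum_{k=0}^{c\log^4 n} (2n^2)^k \;\le\; \bigl(c\log^4 n + 1\bigr)\,(2n^2)^{c\log^4 n}.
\]

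To finish I would crudely bound this by $n^{b\log^4 n}$: using $2n^2\le n^3$ for $n\ge 2$ gives $(2n^2)^{c\log^4 n}\le n^{3c\log^4 n}$, while the prefactor $c\log^4 n+1$ is at most $n\le n^{\log^4 n}$ for all large $n$, so taking, say, $b=3c+1$ suffices for $n$ large enough (and the corollary's "large enough constant $b$" absorbs the small cases). I do not expect any genuine obstacle here, since the lemma does all the structural work; the only point that needs a moment's care is the invariance of the taxon count and hence of the edge count $2n-3$ along the sequence of TR operations, which is immediate from the definition of a TR operation.
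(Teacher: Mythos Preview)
Your proposal is correct and essentially identical to the paper's proof: both invoke the preceding lemma, bound a single TR operation by fewer than $2n^2$ choices (the paper uses $n\cdot(2n-5)$, you use the slightly looser $n\cdot(2n-3)$), and then bound the number of length-$\le c\log^4 n$ sequences of such operations. The only cosmetic difference is that the paper skips your summation over sequence lengths and writes $(2n^2)^{c\log^4 n}$ directly, taking $b=4c$.
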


\begin{proof}
Each $T'$ with distance from $T$ at most $n \log^2 n$ can be obtained from $T$ by $c \log^4 n$ TR operations. For any tree, the number of ways to perform a TR operation is less than $2n^2$ since we can choose any of the $n$ taxa and reinsert it at any of the $2n-5$ edges other than the one at which it was before the operation. This gives fewer than $(2n^2)^{c\log^4 n}$ phylogenies that can be created by repeating the operation $c \log^4 n$ times. Taking $b=4c$ finishes the proof.
\end{proof}

Now we can prove the maximum quartet compatibility consistency theorem.

\begin{proof}
Suppose some tree $T'$ is consistent with more quartets than $T^*$, and $d_Q(T',T^*) = q$.  At least half of the $q$ quartets where $T^*$ and $T'$ differ must be erroneous; since they are independent errors, this has probability at most $\exp(-q(\frac{(1-2p)^2}{2}))$ by the Chernoff bound.

Let $\mathcal{T}_0$ be the set of all incorrect phylogenies with quartet distance from $T^*$ less than $n \log^2 n$ .  Then $|{\mathcal T}^0| \le n^{b \log^4 n}$, and for trees in ${\mathcal T}_0$, Lemma \ref{dist-lemma} gives that $q\ge n-3$.  The probability that any tree in $T_0$ is consistent with more queries than $T^*$ is bounded by $n^{b \log^4 n} \exp(-(n-3)(\frac{(1-2p)^2}{2}))$, which is $o(1)$ as $n$ grows.

Now, consider the incorrect phylogenies ${\mathcal T}_1$ that are not in ${\mathcal T}_0$.  There are fewer than $2^n n! < 2^{n (1+\log n)}$ such topologies, and for each, $d_q(T,T^*) \ge n \log^2 n$.  The probability that any tree in ${\mathcal T}^1$ is consistent with more quartets than $T^*$ is bounded above by $2^{n(1+\log n)} \exp(-n\log^2 n(\frac{(1-2p)^2}{2}))$, which is $o(1)$ as $n$ grows.

So the probability that any incorrect tree is consistent with more quartets than $T^*$ converges to 0 as $n$ grows.
\end{proof}

\section{Experiments}

We have developed a prototype implementation of our algorithm to investigate its running time and properties. We have tested the algorithm in three scenarios. First, we tested the performance of the algorithm for the case with no errors. Second, we tested the performance of the random walk algorithm when the data was generated according to the model with independent errors. Finally, we ran the random walk algorithm on real biological datasets. 

The tree topologies used in the synthetic data sets were chosen at random from the uniform distribution. In the iid error case, every quartet query gave one of the two possible wrong answers with probability $p$. In our experiments, we set $p=0.1$.

The algorithm for error-free data is very fast even for reasonably large phylogenies. For data sets  having $10000$ taxa or less, constructing the tree takes less than a second. For $20000$ taxa, it takes roughly $2$ seconds. 

The random walk algorithm is roughly 5 times slower than the algorithm for error-free data. Constructing a tree having $10000$ taxa takes about $5$ seconds, whereas a tree with $20000$ taxa requires $9$ seconds.
\begin{center}
\begin{table}
\caption{The running times of the algorithm for the error-free and iid data sets}
\centering 
 \begin{tabular}{|c|c|c|c|c|}

		Algorithm & 1000 & 5000 & 10000 & 20000  \\ \hline
        Error-free & $<1s$ & $<1s$ & $<1s$ & $2s$ \\ \hline
        Random walk & $<1s$ & $2s$ & $5s$ & $9s$  \\ \hline
		\end{tabular}
      \end{table}
\end{center}

We ran the algorithm on several protein families from the Pfam database~\cite{Pfam}. Quartet queries were answered with the Four-Point method~\cite{Felsenstein} based on estimated evolutionary distances between sequences. Distances were estimated based on pairwise BLOSUM62 scores using a method by Sonnhammer and Hollich~\cite{Sonnhammer}. We used neighbor-joining trees on a subset of 150 sequences (chosen at random from the whole set of sequences) as our initial guide trees. Our prototype implementation was able to process a dataset of around 12000 sequences in about 16 minutes (see Table \ref{results}).

\begin{center}
\begin{table}
\caption{The running times of the algorithm for several Pfam families}
\centering
 \begin{tabular}{|c|c|c|c|}
	  \label{results}
      
		Protein family & Sequences & Average length & running time  \\ \hline
        Maf(PF02545) & 1980 & 189.60 & 38s \\ \hline
        2Oxoacid\_dh(PF00198) & 3701 & 225.10 & 1m49s  \\ \hline
		PALP(PF00291) & 11815 & 294.40 & 15m42s  \\ \hline
      \end{tabular}
      \end{table}
\end{center}

In all our experiments, the height of search trees constructed by the algorithm was less than $40$. This supports our view that the constants in Lemma \ref{tree-height-blah} can be improved. 
	  
\section{Conclusion}

We have presented a fast algorithm that is guaranteed to reconstruct the correct phylogeny with high probability under an error model where each quartet query errs with a fixed probability, independently of others. The algorithm runs in $O(n \log n)$ time, which is the lower bound for any phylogeny reconstruction algorithm. Our prototype implementation seems reasonably fast on both real and simulated datasets.

This work could be extended in many directions. From a theoretical perspective, it is interesting whether there exist fast algorithms that offer similar performance guarantees under commonly studied models of sequence evolution, such as Jukes-Cantor or Cavender-Farris. 

 From a practical perspective, it would be interesting to compare the results of our algorithm to others. We plan to extend our algorithm to make use of additional information such as the length of the middle edge in reconstructed quartets. This would enable the algorithm to distinguish between more credible and less credible queries, which may lead to an overall performance improvement. Another way to improve the algorithm is by improving the procedure of finding new quartets to ask so as to minimize the correlation between errors.
\newpage
\bibliographystyle{splncs03}
\bibliography{full_quartets4}

\begin{thebibliography}{10}
\providecommand{\url}[1]{\texttt{#1}}
\providecommand{\urlprefix}{URL }

\bibitem{Pfam}
Bateman, A., Birney, E., Cerruti, L., Durbin, R., Etwiller, L., Eddy, S.R.,
  Griffiths-Jones, S., Howe, K.L., Marshall, M., Sonnhammer, E.L.L.: The pfam
  protein families database. Nucleic Acids Research  30(1),  276--280 (2002)

\bibitem{brodal}
Brodal, G., R, R.F., Pedersen, C.: Computing the quartet distance between
  evolutionary trees in time {O(n log n)}. Algorithmica  38,  377--395 (2003)

\bibitem{bryant}
Bryant, D., Tsang, J., Kearney, P.E., Li, M.: Computing the quartet distance
  between evolutionary trees. In: Proceedings of SODA 2000. pp. 285--286

\bibitem{csuros}
Cs{\H u}r{\"o}s, M.: Fast recovery of evolutionary trees with thousands of
  nodes. J. Comp. Biol.  9(2),  277--297 (2002)

\bibitem{mossel}
Daskalakis, C., Mossel, E., Roch, S.: Phylogenies without branch bounds:
  Contracting the short, pruning the deep. In: Proceedings of RECOMB 2009. pp.
  451--465

\bibitem{Panconesi}
Dubhashi, D.P., Panconesi, A.: Concentration of measure for the analysis of
  randomized algorithms. Cambridge Univ. Press (2009)

\bibitem{erdos}
Erd{\"o}s, P.L., Steel, M.A., Sz{\'e}kely, L.A., Warnow, T.: A few logs suffice
  to build (almost) all trees: Part {II}. Theor. Comput. Sci  221(1-2),
  77--118 (1999)

\bibitem{Feige}
Feige, U., Peleg, D., Laghavan, P., Upfal, E.: Computing with unreliable
  information. In: Proceedings of STOC 1990. pp. 128--137

\bibitem{Felsenstein}
Felsenstein, J.: Inferring Phylogenies. Sinauer (2001)

\bibitem{gronau}
Gronau, I., Moran, S., Snir, S.: Fast and reliable reconstruction of
  phylogenetic trees with very short edges. In: Proceedings of SODA 2008. pp.
  379--388

\bibitem{Kannan}
Kannan, S.K., Lawler, E.L., Warnow, T.J.: Determining the evolutionary tree
  using experiments. J. Algorithms  21(1),  26--50 (1996)

\bibitem{Karp}
Karp, R.M., Kleinberg, R.: Noisy binary search and its applications. In:
  Proceedings of SODA 2007. pp. 881--890

\bibitem{king}
King, V., Zhang, L., Zhou, Y.: On the complexity of distance-based evolutionary
  tree reconstruction. In: SODA. pp. 444--453 (2003)

\bibitem{FastTree}
Price, M.N., Dehal, P.S., Arkin, A.P.: {FastTree}: Computing large minimum
  evolution trees with profiles instead of a distance matrix. Mol. Biol. Evol.
  26(7),  1641--1650 (2009)

\bibitem{gascuel}
Ranwez, V., Gascuel, O.: Quartet-based phylogenetic inference: Improvements and
  limits. Mol. Biol. Evol.  18(6),  1103--1116 (2001)

\bibitem{short-qp}
Snir, S., Warnow, T., Rao, S.: Short quartet puzzling: {A} new quartet-based
  phylogeny reconstruction algorithm. Journal of Computational Biology  15(1),
  91--103 (2008), \url{http://dx.doi.org/10.1089/cmb.2007.0103}

\bibitem{Sonnhammer}
Sonnhammer, E.L.L., Hollich, V.: Scoredist: {A} simple and robust protein
  sequence distance estimator. BMC Bioinformatics  6,  108 (2005)

\bibitem{QP}
Strimmer, K., von Haeseler, A.: Quartet puzzling: a quartet maximum-likelihood
  method for reconstructing tree topologies. Mol. Biol. Evol.  13(7),  964--969
  (1996)

\bibitem{Kao}
Wu, G., Kao, M.Y., Lin, G., You, J.H.: Reconstructing phylogenies from noisy
  quartets in polynomial time with a high success probability. Alg. Mol. Biol.
  3 (2008)

\bibitem{Lin}
Wu, G., You, J.H., Lin, G.: Quartet-based phylogeny reconstruction with answer
  set programming. IEEE/ACM Trans. Comput. Biol. Bioinf.  4(1),  139--152
  (2007)

\end{thebibliography}
\end{document}